\newtheorem{theorem}{Theorem}
\newtheorem{lemma}{Lemma}
\newtheorem{corollary}{Corollary}
\theoremstyle{definition}
\newtheorem{definition}{Definition}
\newtheorem{observation}{Observation}
\newcommand{\outpro}[2]{\vert #1\rangle\langle #2\vert}
\newcommand{\ket}[1]{\vert #1\rangle}
\newcommand{\bra}[1]{\langle #1\vert}
\newcommand{\ptr}[2]{\mathsf{Tr}_{#1}(#2)}
\begin{document}

% \preprint{APS/123-QED}

\title{\textbf{On properties of Schmidt Decomposition} 
}% 

\author{Mithilesh Kumar}
 % \altaffiliation[Also at ]{Physics Department, XYZ University.}%Lines break automatically or can be forced with \\
% \author{Second Author}%
 \email{Contact author: admin@drmithileshkumar.com}
 \homepage{https://drmithileshkumar.com}
% \affiliation{
%  Authors' affiliations\\
%   Include all institutions where the work was conducted: department or division, institution, city, state (if relevant), and country, in this order.
% }

% \collaboration{CLEO Collaboration}%\noaffiliation

% \date{\today}% It is always \today, today,
             %  but any date may be explicitly specified

\begin{abstract}
Schmidt decomposition is a powerful tool in quantum information. While Schmidt decomposition is universal for bipartite states, its not for multipartite states. In this article, we review properties of bipartite Schmidt decompositions and study which of them extend to multipartite states. In particular, Schmidt number (the number of non-zero terms in Schmidt decomposition) define an equivalence class using separable unitary transforms. We show that it is NP-complete to partition a multipartite state that attains the highest Schmidt number. In addition, we observe that purifications of a density matrix of a composite system preserves Schmidt decomposability.
\end{abstract}

%\keywords{Suggested keywords}%Use showkeys class option if keyword
                              %display desired
\maketitle

%\tableofcontents

\section{\label{sec:intro}Introduction}
Composite quantum systems exhibit a special property called entanglement. Entangled quantum subsystems remain connected independent of separation between them, we don't fully understand how. Entanglement is now considered a physical resource. We can \emph{spend} entanglement within a system to accomplice tasks like quantum communication, teleportation, superdense coding etc. Naturally, we would like to measure amount of entanglement in a state. One method is to rewrite the state in a specific form called Schmidt decomposition and the number of non-zero terms in the decomposition quantify the amount of entanglement. In the simplest case, a quantum system is composed of two subsystems \(A\) and \(B\) whose state in standard orthonormal basis can be expressed as 
\begin{align}\label{eq:bipartite}
    \ket{\psi} = \sum_{i,j} a_{ij}\ket{i_A}\ket{i_B}
\end{align}
where \(a_{ij}\) are complex numbers. The state vectors \(\ket{i_A}\) and \(\ket{i_B}\) form orthonormal basis of state spaces \(\mathcal{H}_A\) and \(\mathcal{H}_B\) respectively. The indices \(i, j\) run over the dimensions of each of the spaces. A restatement of Singular Value Decomposition implies that the bipartite state \ref{eq:bipartite} can always be rewritten in another bases of \(\mathcal{H}_A\) and \(\mathcal{H}_B\) as
\begin{align}\label{eq:bischmidt}
    \ket{\psi} = \sum_{\ell} \lambda_\ell\ket{\ell_A}\ket{\ell_B}
\end{align}
where \(\lambda_\ell\) can be chosen to be real and non-negative. The states \(\ket{\ell_A}\) and \(\ket{\ell_B}\) are orthonormal within their respective spaces (but need not form a basis) \cite{BookPeres}. The number of non-zero terms in the state \ref{eq:bischmidt} is called the \emph{Schmidt number} of the state. If the Schmidt number of a state is \(1\), i.e. there is only one term in the decomposition, then the system does not exhibit any entanglement. As the state \(\ket{\psi}\) is called a \emph{product state}. Higher is the Schmidt number, larger is the amount of entanglement within the system. Terhal and Horodecki \cite{HorodeckiNumber} extended the notion of Schmidt number to bipartite density matrices, but in this article, we restrict our inquiry to Schmidt number of pure states only. Vidal \cite{Vidal} and Nest \cite{Nest} showed that states with high Schmidt number are necessary for quantum speed-up.

Naturally, there exist entanglement within a system even if it can been seen as composed of more than two subsystems. Such systems are called multipartite. We can label each part as \(A_1, A_2, ..., A_n\) and a multipartite state can be expanded using orthonormal basis states as 
\begin{align}\label{eq:multipartite}
    \ket{\psi} = \sum_{i_1,i_2, ..., i_n} a_{i_1i_2...i_n}\ket{i^{A_1}_1}\ket{i^{A_2}_2}\cdots\ket{i^{A_n}_n}
\end{align}
where \(a_{i_1i_2...i_n}\) are complex numbers satisfying normalization condition \(\sum |a_{i_1i_2...i_n}|^2 = 1\). The states \(\ket{i^{A_1}_1},\ket{i^{A_2}_2},\cdots\), \(\ket{i^{A_n}_n}\) are orthonormal within their respective spaces \(\mathcal{H}_{A_k}\). The Schmidt decomposition is just an extension of the bipartite case written as 
\begin{align}\label{eq:multischmidt}
    \ket{\psi} = \sum_{\ell} \lambda_\ell\ket{\ell_{A_1}}\ket{\ell_{A_2}}\cdots\ket{\ell_{A_n}}
\end{align}
where \(\lambda_\ell\) can be chosen to be real and non-negative. The states \(\ket{\ell_{A_k}}\) are orthonormal within their respective spaces (but need not form a basis). It turns out that this form is too strict to be satisfied by all multipartite states. There are multipartite states do not admit Schmidt decomposition. Because of this, Huber and de Vicente \cite{Huber} tried to consider a vector of Schmidt numbers of two dimensional subsystems in tripartite case.  In this work, we'll only consider Schmidt decomposable states.

In this article, we'll review some well known properties of Schmidt number of bipartite states. We'll analyze which of them hold for multipartite states. We'll also consider some of the applications of Schmidt decomposition in quantum information theory.

\section{Classification based on Schmidt bases}\label{sec:schmidtBases}
As has been noted, all bipartite states have Schmidt decomposition. The case for bipartite states is well known and the reader is referred to \cite{nielsen00}
\begin{theorem}
    If \(\ket{\psi}\) is multipartite Schmidt decomposable state, then the eigenvalues of reduced density matrices of any subsystem are the same.
\end{theorem}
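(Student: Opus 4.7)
The plan is to compute the reduced density matrix on an arbitrary subsystem directly from the multipartite Schmidt form and read off its eigenvalues. The crucial structural fact I will exploit is that, within each factor $\mathcal{H}_{A_k}$, the vectors $\{\ket{\ell_{A_k}}\}_\ell$ appearing in the Schmidt sum are orthonormal; this is exactly what collapses the partial trace to a diagonal expression depending only on the Schmidt coefficients $\lambda_\ell$ (which are independent of $k$).

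First I would write the global density matrix as a double sum,
\begin{align*}
    \ket{\psi}\bra{\psi} = \sum_{\ell,\ell'} \lambda_\ell \lambda_{\ell'}\, \ket{\ell_{A_1}}\bra{\ell'_{A_1}}\otimes\cdots\otimes\ket{\ell_{A_n}}\bra{\ell'_{A_n}},
\end{align*}
using the fact that the $\lambda_\ell$ are real. To obtain the reduced density matrix on $A_k$, I would trace out every factor $A_j$ with $j\neq k$. Each such partial trace contributes a factor $\inpro{\ell'_{A_j}}{\ell_{A_j}} = \delta_{\ell\ell'}$ by orthonormality of the Schmidt vectors in $\mathcal{H}_{A_j}$. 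Taking the product of these Kronecker deltas across $j\neq k$ collapses the double sum into a single sum
\begin{align*}
    \ptr{\overline{A_k}}{\ket{\psi}\bra{\psi}} = \sum_{\ell} \lambda_\ell^2\, \ket{\ell_{A_k}}\bra{\ell_{A_k}}.
\end{align*}
Since $\{\ket{\ell_{A_k}}\}_\ell$ is orthonormal, this is an eigendecomposition of the reduced state and the eigenvalues are precisely $\{\lambda_\ell^2\}_\ell$, a set that manifestly does not depend on the chosen subsystem $A_k$.

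The same argument actually gives slightly more: if $S$ is any subset of indices and we trace out its complement, the surviving tensors $\bigotimes_{k\in S}\ket{\ell_{A_k}}$ are again orthonormal (as tensor products of orthonormal families), so the reduced state on $S$ has the same spectrum $\{\lambda_\ell^2\}$. I would mention this generalization as a brief remark after the main computation.

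There is no real obstacle here beyond careful index bookkeeping; the essential input is the multipartite Schmidt form itself, which provides orthonormality in \emph{every} factor simultaneously. The only point that deserves emphasis is that this argument fails for generic multipartite states precisely because such a simultaneous orthonormal decomposition need not exist, which is why the conclusion is a nontrivial property of Schmidt decomposability rather than a tautology.
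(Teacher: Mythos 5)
Your proof is correct and follows essentially the same route as the paper's: expand $\outpro{\psi}{\psi}$ as a double sum over Schmidt indices, use orthonormality of the Schmidt vectors in each traced-out factor to collapse it to $\sum_\ell \lambda_\ell^2 \outpro{\ell_{A_k}}{\ell_{A_k}}$, and read off the spectrum. If anything, your version is slightly more careful than the paper's, which states the common eigenvalues as $\lambda_\ell$ rather than $\lambda_\ell^2$.
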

\begin{proof}
    Using the Schmidt decomposition
    \begin{align}\label{eq:multischmidt}
    \ket{\psi} = \sum_{\ell} \lambda_\ell\ket{\ell_{A_1}}\ket{\ell_{A_2}}\cdots\ket{\ell_{A_n}}
\end{align}
The density matrix is 
\begin{align}
    \rho = \outpro{\psi}{\psi} = \sum_{\ell, m} \lambda_\ell\lambda_m\ket{\ell_{A_1}}\ket{\ell_{A_2}}\cdots\ket{\ell_{A_n}}\bra{\ell_{A_1}}\bra{\ell_{A_2}}\cdots\bra{\ell_{A_n}}
\end{align}
Tracing out subsystems of any subset of indices \(S\subset \{1, 2, ..., n\}\), we get
    \begin{align}
        \rho^{A_{S^C}} = \sum_\ell \lambda_\ell^2\outpro{\ell_{S^C}}{\ell_{S^C}}
    \end{align}
    where \(S^C = \{1,2,...,n\} - S\) and the states \(\ell_{S^C}\) is product states running on indices over \(S^C\). Clearly all reduced density matrices \(\rho^{A_{S^C}}\) have the same eigenvalues \(\lambda_\ell\).
\end{proof}
Unfortunately, the above condition for Schmidt decomposability is not sufficient as shown by the following example. Consider the following tripartite state
\begin{align}
    \ket{W} = \frac{1}{\sqrt{3}}\left ( \ket{001} + \ket{010} + \ket{100}\right )
\end{align}
It is well known that \(\ket{W}\) does not admit Schmidt decomposition. We can show it using the criteria developed in Kumar \cite{kumarMultipartite}.
\begin{theorem}\label{thm:tripartite}\cite{kumarMultipartite}
    A tripartite state \(\ket{\psi}\) is Schmidt decomposable if and only if
    \begin{enumerate}
        \item the matrix set \(\mathcal{A}\) commutes positively, and
        \item the matrix \(S = [diag(P^\dagger A_i Q^\dagger)]\) is scaled unitary.
    \end{enumerate}
\end{theorem}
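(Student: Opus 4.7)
The strategy is to rephrase Schmidt decomposability as a simultaneous matrix-factorization statement about the coefficient tensor. Expand $\ket{\psi}=\sum_{i,j,k}a_{ijk}\ket{i}\ket{j}\ket{k}$ and slice the $3$-tensor along the first index to obtain the family $\mathcal{A}=\{A_1,A_2,\ldots\}$ with $(A_i)_{jk}=a_{ijk}$. Schmidt decomposability is then the statement that there exist unitaries $U,V,W$ and a real nonnegative diagonal $\Lambda=\mathrm{diag}(\lambda_\ell)$ for which $a_{abc}=\sum_\ell \lambda_\ell U_{a\ell}V_{b\ell}W_{c\ell}$, so the task is to translate this trilinear factorization into intrinsic conditions on $\mathcal{A}$.

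For the ``only if'' direction, I would substitute the Schmidt form into each slice to find $V^{\dagger}A_a W^{*}=\mathrm{diag}(\lambda_\ell U_{a\ell})_\ell$. Setting $P:=V$ and $Q:=W^{T}$, each $P^{\dagger}A_a Q^{\dagger}$ is simultaneously diagonal with real nonnegative entries available along the diagonal, which is exactly the content of ``$\mathcal{A}$ commutes positively''. At the same time, the entry $S_{a\ell}=(P^{\dagger}A_a Q^{\dagger})_{\ell\ell}=\lambda_\ell U_{a\ell}$ exhibits $S=U\Lambda$ as a unitary multiplied by a nonnegative diagonal, i.e.~a scaled unitary.

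For the ``if'' direction, condition (1) would supply unitaries $P,Q$ that simultaneously diagonalize $\mathcal{A}$, yielding diagonal matrices $D_a=P^{\dagger}A_a Q^{\dagger}$ whose diagonals assemble into $S$. Condition (2) then factors $S=U\Lambda$ with $U$ unitary and $\Lambda$ a nonnegative diagonal. Unwinding the substitutions with $V:=P$ and $W:=Q^{*}$ reconstructs $a_{abc}=\sum_\ell \lambda_\ell U_{a\ell}V_{b\ell}W_{c\ell}$, delivering the Schmidt decomposition.

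The main obstacle will be pinning down ``commutes positively'' so that it is exactly the necessary and sufficient condition for a simultaneous unitary bidiagonalization of the family $\mathcal{A}$ with real nonnegative diagonals. A natural candidate is that $\{A_i A_j^{\dagger}\}$ and $\{A_i^{\dagger}A_j\}$ each form a mutually commuting Hermitian family, so that they share common left and right eigenbases; a compatibility requirement is then needed to align the phases so that the resulting diagonals are real and nonnegative. Verifying that this operational definition coincides with the one invoked in \cite{kumarMultipartite}, and that the phase-compatibility is automatic rather than an extra hypothesis, is where the technical work will concentrate.
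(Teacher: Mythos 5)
First, note that the paper itself does not prove this theorem: it is imported verbatim from \cite{kumarMultipartite} and used only as a black box to certify that \(\ket{W}\) is not Schmidt decomposable, so there is no in-paper proof to compare against. On its own terms, your reformulation is the right one and matches how the paper applies the criterion: slice the coefficient tensor along the first index, observe that Schmidt decomposability of \(a_{abc}=\sum_\ell\lambda_\ell U_{a\ell}V_{b\ell}W_{c\ell}\) forces \(V^\dagger A_a W^{*}=\mathrm{diag}(\lambda_\ell U_{a\ell})\), and read off that the matrix of diagonals is a nonnegative diagonal times (the transpose of) a matrix with orthonormal columns. Your ``only if'' direction goes through by substitution, and you correctly identify that condition (2) is what kills \(\ket{W}\) even though its slices are simultaneously diagonalizable.

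The genuine gap is in the ``if'' direction, and you have only half-acknowledged it. You let condition (1) ``supply unitaries \(P,Q\) that simultaneously diagonalize \(\mathcal{A}\),'' but the only concrete reading of ``commutes positively'' supported by the paper's worked example (it checks merely that \(A_iA_i^\dagger\) and \(A_i^\dagger A_i\) pairwise commute) does not imply simultaneous bidiagonalizability. Take \(A_0=\mathrm{diag}(1,2)\) and \(A_1=\begin{bmatrix}0&1\\1&0\end{bmatrix}\): all of \(A_iA_i^\dagger\), \(A_i^\dagger A_i\) are diagonal, hence the family commutes in this sense, yet no \(P,Q\) render both \(P^\dagger A_0Q^\dagger\) and \(P^\dagger A_1Q^\dagger\) diagonal --- the distinct singular values of \(A_0\) force \(P\) and \(Q\) to be permuted phase matrices with matching permutations, and every such choice leaves \(A_1\) antidiagonal. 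Commutativity of the left (resp.\ right) positive parts aligns the left (resp.\ right) singular bases separately but does not pair them consistently across slices; and your proposed patch, that \(\{A_iA_j^\dagger\}_{i\neq j}\) form a ``mutually commuting Hermitian family,'' fails already because these cross products are not Hermitian. This matters logically: if some \(P^\dagger A_iQ^\dagger\) retains off-diagonal entries, then reconstructing \(a_{abc}\) from \(S\) alone discards those entries and does not recover \(\ket{\psi}\), so condition (2) cannot rescue the argument. Until ``commutes positively'' is pinned down (from \cite{kumarMultipartite}) as a condition that genuinely characterizes simultaneous singular value decomposition of the slice family, the reverse implication in your outline does not close.
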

The matrix set of \(\ket{W}\) is 
\begin{align*}
    \mathcal{A} = \left\{A_0 = \frac{1}{\sqrt{3}}\begin{bmatrix}
        0 & 1\\
        1 & 0
    \end{bmatrix}, A_1 = \frac{1}{\sqrt{3}}\begin{bmatrix}
        1 & 0\\
        0 & 0
    \end{bmatrix}\right\}
\end{align*}
Computing the positive matrices \(A_0A_0^\dagger, A_0^\dagger A_0, A_1A_1^\dagger, A_1^\dagger A_1\), we get
\begin{align*}
    C_0 &= A_0A_0^\dagger = A_0^\dagger A_0 = \frac{1}{3}\begin{bmatrix}
        1 & 0\\
        0 & 1
    \end{bmatrix}\\
    C_1 &= A_1A_1^\dagger = A_1^\dagger A_1 = \frac{1}{3}\begin{bmatrix}
        1 & 0\\
        0 & 0
    \end{bmatrix}\\
\end{align*}
The matrix set \(\mathcal{A}\) is positively commuting as \(C_0C_1 = C_1C_0\). The diagonalization pair of unitary matrices are
\begin{align*}
    \left\{P = \begin{bmatrix}
        0 & 1\\
        1 & 0
    \end{bmatrix}, Q = \begin{bmatrix}
        0 & 1\\
        1 & 0
    \end{bmatrix}\right\}
\end{align*}
The columns of the matrix \(S\) are diagonals of 
\begin{align*}
    \left\{P^\dagger A_0 Q^\dagger = \frac{1}{\sqrt{3}}\begin{bmatrix}
        1 & 0\\
        0 & 1
    \end{bmatrix}, P^\dagger A_1 Q^\dagger = \frac{1}{\sqrt{3}}\begin{bmatrix}
        0 & 0\\
        0 & 1
    \end{bmatrix}\right\}
\end{align*}
That is, 
\begin{align*}
    S &= \frac{1}{\sqrt{3}} \begin{bmatrix}
        1 & 0\\
        1 & 1
    \end{bmatrix}\\
    SS^\dagger &= \frac{1}{3}\begin{bmatrix}
        1 & 1\\
        1 & 2
    \end{bmatrix} \neq \textbf{diagonal}
\end{align*}
Hence \(\ket{W}\) does not have Schmidt decomposition. 

Now we'll show that reduced density matrices are identical for \(\rho = \outpro{\psi}{\psi}\).
\begin{align*}
    \rho = \outpro{\psi}{\psi} = \frac{1}{3} [&\outpro{001}{001} + \outpro{010}{001} + \outpro{100}{001} + \\
    & \outpro{001}{010} + \outpro{010}{010} + \outpro{100}{010} + \\
    & \outpro{001}{100} + \outpro{010}{100} + \outpro{100}{100}
    ]
\end{align*}
Tracing out two subsystems gives
\begin{align*}
    \rho^A &= \frac{2}{3}\outpro{0}{0} + \frac{1}{3}\outpro{1}{1}\\
    \rho^B &= \frac{2}{3}\outpro{0}{0} + \frac{1}{3}\outpro{1}{1}\\
    \rho^C &= \frac{2}{3}\outpro{0}{0} + \frac{1}{3}\outpro{1}{1}
\end{align*}
As is evident, \(\rho^A, \rho^B, \rho^C\) have same eigenvalues, but \(\ket{\psi}\) is not Schmidt decomposable.

As is noted in \cite{nielsen00}, for a bipartite state, if \(\ket{\psi} = \sum_i\lambda_i\ket{i_A}\ket{i_B}\) is the Schmidt decomposition, then \(\sum_i\lambda_i(U\ket{i_A})(V\ket{i_B})\) are Schmidt decomposition of \((U\otimes V)\ket{\psi}\). This observation extends to multipartite systems as well. Moreover, if two states \(\psi\) and \(\phi\) have the same Schmidt coefficients, then they must be unitary transforms of each other, i.e. \(\ket{\psi} = (U\otimes V)\ket{\phi}\).
\begin{theorem}\label{thm:multiUnitary}
    Suppose \(\ket{\psi}\) and \(\ket{\phi}\) are two Schmidt decomposable multipartite states on the same Hilbert space. Then, \(\ket{\psi} = (U_1\otimes U_2\otimes \cdots U_n)\ket{\phi}\) if and only if the Schmidt decompositions of \(\ket{\psi}\) and \(\ket{\phi}\) have the same Schmidt coefficients.
\end{theorem}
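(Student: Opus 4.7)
The plan is to prove both directions by direct construction, using the uniqueness of Schmidt coefficients (which follows from Theorem 1) to handle the ``only if'' direction, and extending partial orthonormal sets to full bases to handle the ``if'' direction.

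For the forward implication, suppose $\ket{\psi} = (U_1 \otimes \cdots \otimes U_n)\ket{\phi}$ and let $\ket{\phi} = \sum_\ell \lambda_\ell \ket{\ell_{A_1}^\phi}\cdots\ket{\ell_{A_n}^\phi}$ be the Schmidt decomposition of $\ket{\phi}$. Applying $U_1 \otimes \cdots \otimes U_n$ term by term yields $\ket{\psi} = \sum_\ell \lambda_\ell (U_1\ket{\ell_{A_1}^\phi})\cdots(U_n\ket{\ell_{A_n}^\phi})$, and since each $U_k$ is unitary, the vectors $\{U_k\ket{\ell_{A_k}^\phi}\}_\ell$ remain orthonormal in $\mathcal{H}_{A_k}$. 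This is a Schmidt decomposition of $\ket{\psi}$ with coefficients $\{\lambda_\ell\}$. To conclude that these match the coefficients of \emph{any} Schmidt decomposition of $\ket{\psi}$, I would invoke uniqueness: by Theorem 1, tracing out all but one subsystem produces $\sum_\ell \lambda_\ell^2 \outpro{\ell_{A_k}}{\ell_{A_k}}$, a spectral decomposition of the reduced density matrix $\rho^{A_k}$. Since the nonzero eigenvalues of $\rho^{A_k}$ are invariants of the state, the multiset $\{\lambda_\ell^2\}$ (and hence $\{\lambda_\ell\}$, taking nonnegative roots) is determined by $\ket{\psi}$.

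For the reverse implication, write $\ket{\psi} = \sum_\ell \lambda_\ell \ket{\ell_{A_1}^\psi}\cdots\ket{\ell_{A_n}^\psi}$ and $\ket{\phi} = \sum_\ell \lambda_\ell \ket{\ell_{A_1}^\phi}\cdots\ket{\ell_{A_n}^\phi}$ with the same coefficients. For each $k$, the sets $\{\ket{\ell_{A_k}^\phi}\}_\ell$ and $\{\ket{\ell_{A_k}^\psi}\}_\ell$ are orthonormal in $\mathcal{H}_{A_k}$ but need not be bases; I would extend both to orthonormal bases of $\mathcal{H}_{A_k}$ and define $U_k$ to be the unitary sending the extended $\phi$-basis to the extended $\psi$-basis (in particular, $U_k\ket{\ell_{A_k}^\phi} = \ket{\ell_{A_k}^\psi}$ for the relevant $\ell$). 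Then $(U_1 \otimes \cdots \otimes U_n)\ket{\phi}$ reproduces $\ket{\psi}$ term by term.

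The main subtlety, and the one place to be careful, is that Schmidt coefficients may repeat, in which case the orthonormal families $\{\ket{\ell_{A_k}^\psi}\}_\ell$ are not uniquely determined by $\ket{\psi}$. This is not an obstacle for either direction: in the forward direction we only need the multiset of coefficients to agree, and in the reverse direction we only need the existence of \emph{some} unitary mapping one orthonormal set to the other, which is guaranteed by the standard basis-extension argument regardless of degeneracies.
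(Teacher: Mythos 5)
Your proposal is correct and follows essentially the same route as the paper: push the Schmidt decomposition of \(\ket{\phi}\) through the local unitaries for the forward direction, and build the local unitaries from the two orthonormal families (via basis extension) for the converse. The only difference is that you explicitly justify uniqueness of the Schmidt coefficients through the reduced-density-matrix eigenvalues and flag the degeneracy issue, both of which the paper leaves implicit.
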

\begin{proof}
    If \(\ket{\psi} = (U_1\otimes U_2\otimes \cdots U_n)\ket{\phi}\), we can use the Schmidt decomposition of \(\ket{\phi}\) as 
    \begin{align*}
        \ket{\phi} &= \sum_\ell \lambda_\ell\ket{\ell_{A_1}}\ket{\ell_{A_2}}\cdots\ket{\ell_{A_n}}\\
        \ket{\psi} &= \sum_\ell \lambda_\ell(U_{A_1}\ket{\ell_{A_1}})(U_{A_2}\ket{\ell_{A_2}})\cdots(U_{A_n}\ket{\ell_{A_n}})\\
         &= \sum_\ell \lambda_\ell\ket{\ell^\psi_{A_1}}\ket{\ell^\psi_{A_2}}\cdots\ket{\ell^\psi_{A_n}}
    \end{align*}
    As states \(\ket{\ell^\psi_{A_k}}\) are obtained via unitary transform of orthonormal states \(\ket{\ell_{A_k}}\), they remain orthonormal and hence provide a valid Schmidt decomposition of \(\ket{\psi}\). Therefore, both \(\psi\) and \(\phi\) have the same Schmidt coefficients.

    For converse, assume \(\psi\) and \(\phi\) have the same Schmidt coefficients, i.e.
    \begin{align*}
        \ket{\phi} &= \sum_\ell \lambda_\ell\ket{\ell_{A_1}}\ket{\ell_{A_2}}\cdots\ket{\ell_{A_n}}\\
        \ket{\psi} &= \sum_\ell \lambda_\ell\ket{\ell^\psi_{A_1}}\ket{\ell^\psi_{A_2}}\cdots\ket{\ell^\psi_{A_n}}
    \end{align*}
    Since states \(\{\ket{\ell_{A_k}}\}\) and \(\{\ket{\ell^\psi_{A_k}}\}\) are orthonormal vectors, there exists unitary matrix \(U_{A_k}\) such that \(\ket{\ell^\psi_{A_k}} = U_{A_k}\ket{\ell_{A_k}}\). This implies that 
    \begin{align*}
        \ket{\psi} &= \sum_\ell \lambda_\ell\ket{\ell^\psi_{A_1}}\ket{\ell^\psi_{A_2}}\cdots\ket{\ell^\psi_{A_n}}\\
        &=\sum_\ell \lambda_\ell(U_{A_1}\ket{\ell_{A_1}})(U_{A_2}\ket{\ell_{A_2}})\cdots(U_{A_n}\ket{\ell_{A_n}})\\
        &= (U_1\otimes U_2\otimes \cdots U_n) \sum_\ell \lambda_\ell\ket{\ell_{A_1}}\ket{\ell_{A_2}}\cdots\ket{\ell_{A_n}}\\
        &= (U_1\otimes U_2\otimes \cdots U_n)\ket{\phi}
    \end{align*}
\end{proof}
Theorem \ref{thm:multiUnitary} implies that if we define Schmidt decomposable states \(\ket{\psi}\) and \(\ket{\phi}\) \emph{equivalent} if \(\ket{\psi} = (U_1\otimes U_2\otimes \cdots U_n)\ket{\phi}\), then the Schmidt number is decided by the number of separable orthonormal basis states. We can consider Schmidt decomposition as establishing a bijection between orthonormal set of vectors of each component space. 
\begin{figure}
    \centering
    \includegraphics[width=0.3\linewidth]{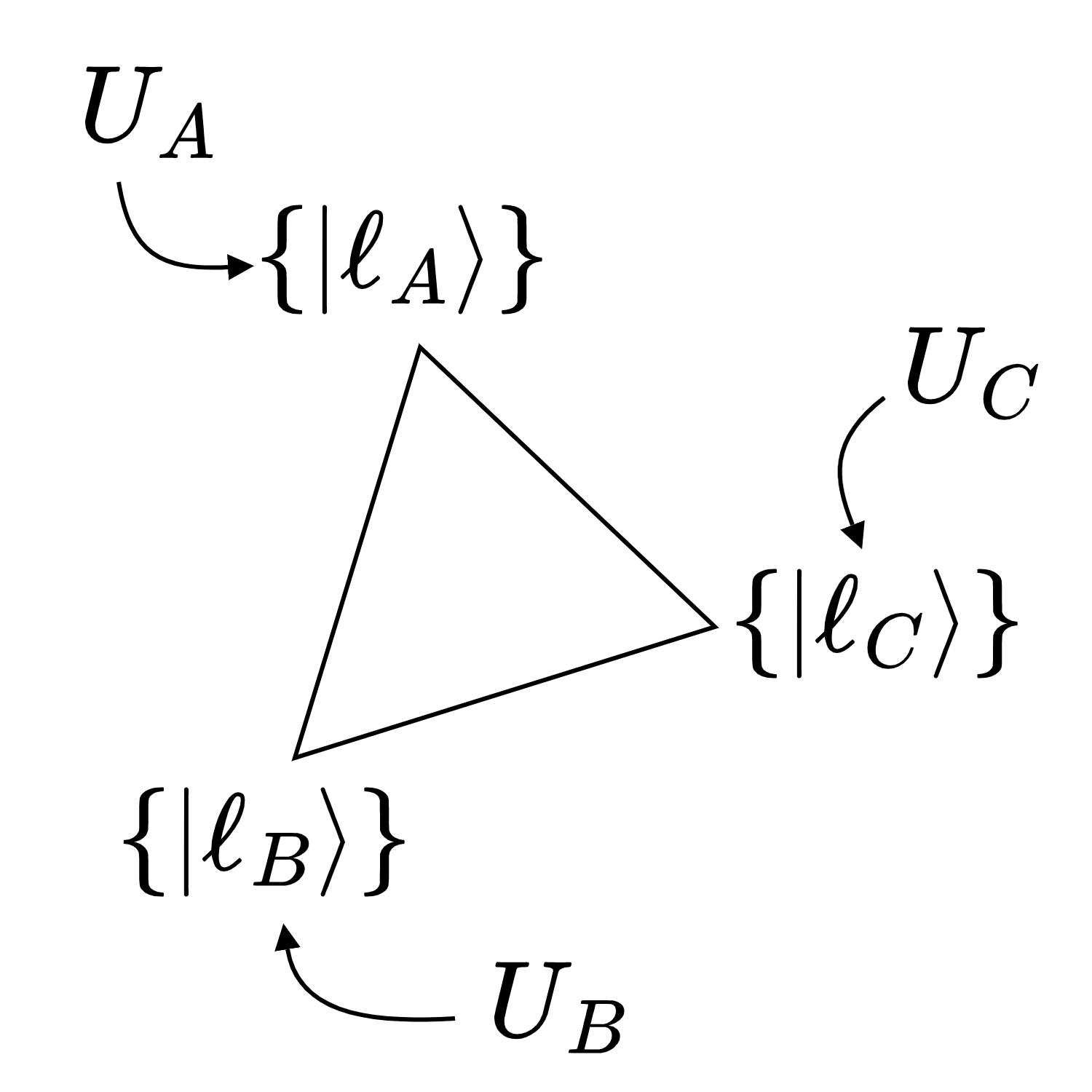}
    \caption{Schmidt decomposition establishes a bijection between orthonormal set of vectors of each subsystem. Equivalent states having the same Schmidt coefficients are generated using unitary operations on individual subsystems.}
    \label{fig:enter-label}
\end{figure}

\begin{observation}
    All two qubit Schmidt bases with Schmidt number \(1\) are equivalent to \(\ket{00}\) and Schmidt number \(2\) are equivalent to \(\{\ket{00},\ket{11}\}\).
\end{observation}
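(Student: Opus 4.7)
The plan is to view this observation as a direct corollary of Theorem \ref{thm:multiUnitary} specialized to the bipartite qubit setting, exploiting the dimension-two feature that any orthonormal set in $\mathbb{C}^2$ immediately extends to (or already is) a full basis and can therefore be rotated to the computational basis by a unitary.

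For the Schmidt number $1$ case, I would start from the Schmidt decomposition $\ket{\psi}=\ket{0_A}\ket{0_B}$ (the single coefficient must be $1$ by normalization). Because $\ket{0_A}, \ket{0_B}$ are unit vectors in $\mathbb{C}^2$, I can pick any unitaries $U_A, U_B$ with $U_A\ket{0_A}=\ket{0}$ and $U_B\ket{0_B}=\ket{0}$ (for example, complete each to an orthonormal basis and define $U$ as the corresponding change-of-basis). Then $(U_A\otimes U_B)\ket{\psi}=\ket{00}$, so the Schmidt basis is equivalent to $\ket{00}$.

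For the Schmidt number $2$ case, the decomposition reads $\ket{\psi}=\lambda_0\ket{0_A}\ket{0_B}+\lambda_1\ket{1_A}\ket{1_B}$, where the Schmidt orthonormal sets $\{\ket{0_A},\ket{1_A}\}$ and $\{\ket{0_B},\ket{1_B}\}$ each consist of two orthonormal vectors in $\mathbb{C}^2$ and therefore each forms a full orthonormal basis of its qubit. I would define the unitaries $U_A, U_B$ by $U_A\ket{i_A}=\ket{i}$ and $U_B\ket{i_B}=\ket{i}$ for $i=0,1$; unitarity follows from the fact that they map one orthonormal basis to another. Then $(U_A\otimes U_B)\ket{\psi}=\lambda_0\ket{00}+\lambda_1\ket{11}$, whose Schmidt basis is exactly $\{\ket{00},\ket{11}\}$, and the equivalence relation from Theorem \ref{thm:multiUnitary} closes the argument.

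There is essentially no technical obstacle here; the only thing worth stating carefully is the meaning of ``Schmidt basis equivalence''. Under the equivalence $\ket{\psi}\sim\ket{\phi}$ whenever $\ket{\psi}=(U_1\otimes U_2)\ket{\phi}$, Theorem \ref{thm:multiUnitary} already guarantees that two states with identical Schmidt coefficients are equivalent, so the content of the observation is just that in dimension two the Schmidt bases themselves have no further freedom beyond a local unitary rotation to the computational basis, a statement that is built into the fact that any orthonormal $k$-tuple in $\mathbb{C}^2$ with $k\leq 2$ is the image of $\{\ket{0},\dots,\ket{k-1}\}$ under some unitary.
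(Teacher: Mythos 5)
Your proposal is correct and follows exactly the route the paper intends: the observation is stated without proof as an immediate consequence of Theorem \ref{thm:multiUnitary} together with the fact that any orthonormal $k$-tuple in $\mathbb{C}^2$ ($k\leq 2$) is carried to $\{\ket{0},\dots,\ket{k-1}\}$ by some local unitary. Your write-up simply makes that implicit argument explicit, including the correct distinction that it is the Schmidt \emph{bases} (not the coefficients) that are normalized to $\ket{00}$ and $\{\ket{00},\ket{11}\}$.
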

In case of three qubit system, we can partition the system in two ways, (a) one subsystem with \(1\) qubit and another with \(2\) qubits, and (b) three subsystems with \(1\) qubit each. In both the cases, Schmidt rank \(1\) bases are \(\ket{000}\). Schmidt rank \(2\) bases are \(\{\ket{000},\ket{111}\}\). 
\begin{observation}
    There are no Schmidt rank \(3\) states on a three qubit system (irrespective of partition).
\end{observation}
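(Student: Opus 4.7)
The plan is to exploit the orthonormality requirement in the definition of the Schmidt decomposition together with the fact that every local subsystem of a three-qubit state has dimension at most $2$. Recall that in a Schmidt decomposition
\[
\ket{\psi} = \sum_{\ell} \lambda_\ell \ket{\ell_{A_1}}\ket{\ell_{A_2}}\cdots\ket{\ell_{A_n}},
\]
the family $\{\ket{\ell_{A_k}}\}_\ell$ is required to be orthonormal inside $\mathcal{H}_{A_k}$ for each $k$. Hence the number of non-zero Schmidt coefficients is upper-bounded by $\min_k \dim(\mathcal{H}_{A_k})$, simply because no Hilbert space admits more orthonormal vectors than its dimension.

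With this general bound in hand, I would then handle the two partitions of a three-qubit system separately. In partition (a), where one part carries a single qubit of dimension $2$ and the other carries two qubits of dimension $4$, the minimum local dimension is $2$, so the Schmidt rank is at most $2$. In partition (b), where every subsystem is a single qubit of dimension $2$, the minimum is again $2$. In either case the bound rules out a Schmidt rank equal to $3$, which is exactly the claim.

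I do not expect any real obstacle here: the argument is a one-line dimension count once the orthonormality condition is recalled, and no additional structural property (such as the commutation conditions of Theorem~\ref{thm:tripartite}) needs to be invoked. The only thing to be careful about is to make explicit that the Schmidt number upper bound $\min_k \dim(\mathcal{H}_{A_k})$ follows from orthonormality alone, not from any deeper property of Schmidt decomposability, so that the observation applies uniformly to all states, decomposable or not.
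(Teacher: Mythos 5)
Your dimension-counting argument is correct: since the Schmidt vectors $\{\ket{\ell_{A_k}}\}_\ell$ must be orthonormal in each $\mathcal{H}_{A_k}$, the number of terms is bounded by $\min_k \dim(\mathcal{H}_{A_k})$, which equals $2$ for both partitions of a three-qubit system. The paper states this as an observation without an explicit proof, but your reasoning is exactly the implicit argument behind it (and behind the subsequent $2^{\lfloor n/2\rfloor}$ bound), so there is nothing to add.
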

\begin{observation}
    The Schmidt number of an \(n\) qubit system is bounded by \(2^{\lfloor \frac{n}{2}\rfloor}\) irrespective of partition. In fact, the highest Schmidt number is achieved for equal bipartition of the system.
\end{observation}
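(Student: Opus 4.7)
The plan is to reduce the claim to a straightforward dimension count on orthonormal sets. In any multipartite Schmidt decomposition $\ket{\psi} = \sum_\ell \lambda_\ell \ket{\ell_{A_1}} \cdots \ket{\ell_{A_p}}$, the vectors $\{\ket{\ell_{A_k}}\}_\ell$ are orthonormal inside $\mathcal{H}_{A_k}$, so the number of non-zero $\lambda_\ell$ is at most $\dim{\mathcal{H}_{A_k}}$ for every $k$. Hence the Schmidt number is bounded by $\min_k \dim{\mathcal{H}_{A_k}}$, a fact independent of the structure of the subsystems.

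Specialising to $n$ qubits partitioned into $p \geq 2$ groups of sizes $n_1,\ldots,n_p$ with $\sum_k n_k = n$, this bound becomes $2^{\min_k n_k}$. Because $\min_k n_k \leq \lfloor n/p \rfloor \leq \lfloor n/2 \rfloor$ whenever $p \geq 2$, the uniform upper bound $2^{\lfloor n/2 \rfloor}$ holds for every partition, proving the first assertion. For bipartitions the bound equals $2^{\min(n_1,n_2)}$ and is maximised precisely when $\{n_1,n_2\} = \{\lfloor n/2 \rfloor,\lceil n/2 \rceil\}$; for any finer partition with $p \geq 3$ the quantity $2^{\lfloor n/p \rfloor}$ cannot exceed this, so no multipartition beats equal bipartition.

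To confirm that $2^{\lfloor n/2 \rfloor}$ is actually attained (making equal bipartition the strict optimum, not merely an upper bound), I would exhibit a witness: under the equal bipartition into subsystems $A_1$ of $\lfloor n/2 \rfloor$ qubits and $A_2$ of $\lceil n/2 \rceil$ qubits, consider $\frac{1}{\sqrt{2^{\lfloor n/2 \rfloor}}}\sum_{i=0}^{2^{\lfloor n/2 \rfloor}-1} \ket{i}_{A_1}\ket{i}_{A_2}$ (with $\ket{i}_{A_2}$ taken to be the first $2^{\lfloor n/2 \rfloor}$ computational basis vectors when $n$ is odd). This is manifestly already in Schmidt form with $2^{\lfloor n/2 \rfloor}$ non-zero coefficients.

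There is no genuine obstacle — the entire argument is a Pigeonhole-style count on orthonormal vectors combined with the arithmetic inequality $\min_k n_k \leq \lfloor n/p \rfloor$. The one point demanding a little care is cleanly separating the two parts of the claim: the universal upper bound, which uses only the orthonormality of each Schmidt basis, versus the tightness witness, which requires producing an explicit maximally entangled pair under the equal bipartition.
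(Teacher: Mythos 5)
Your argument is correct and complete. Note that the paper itself states this only as an unproved observation, so there is no in-text proof to compare against; your write-up supplies exactly the missing justification. The two ingredients you isolate are the right ones: (i) the universal bound follows because in any Schmidt decomposition the vectors $\{\ket{\ell_{A_k}}\}_\ell$ are orthonormal in $\mathcal{H}_{A_k}$, so the number of terms is at most $\min_k \dim{\mathcal{H}_{A_k}} = 2^{\min_k n_k}$, and the arithmetic fact $\min_k n_k \leq \lfloor n/p\rfloor \leq \lfloor n/2\rfloor$ for any $p\geq 2$ closes the bound; (ii) tightness under the equal bipartition is witnessed by the maximally entangled state $2^{-\lfloor n/2\rfloor/2}\sum_i \ket{i}_{A_1}\ket{i}_{A_2}$, which is legitimate because every bipartite pure state admits a Schmidt decomposition, so no decomposability hypothesis is needed there. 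One small caveat worth making explicit: for partitions into $p\geq 3$ parts the notion of Schmidt number in this paper is only defined for Schmidt-decomposable states, so your counting bound should be read as applying to that class (which is all the observation can mean); with that understanding, your conclusion that no finer partition beats the equal bipartition is sound, though for small $n$ (e.g.\ the three-qubit GHZ state) a finer partition can tie the bound rather than fall strictly below it — consistent with the claim as stated.
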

Let us define the problem SCHMIDT-PARTITION  as follows:\\
\textbf{Input:} \(n\) systems of dimensions \(d_1, d_2,...,d_n\)\\
\textbf{Task:} partition the system into two parts such that there exists a state of Schmidt number \(K\)

\begin{theorem}\label{thm:Spart}
    SCHMIDT-PARTITION is \texttt{NP}-complete.
\end{theorem}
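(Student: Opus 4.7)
The plan is to establish both NP membership and NP hardness for SCHMIDT-PARTITION. For membership, I would take as a certificate a subset $S \subseteq \{1, \ldots, n\}$ encoding the bipartition. The verifier computes $P_S = \prod_{i \in S} d_i$ and $P_{S^C} = \prod_{i \in S^C} d_i$ and accepts iff $\min(P_S, P_{S^C}) \geq K$. Both products have bit length at most $\sum_i \lceil \log_2 d_i \rceil$, polynomial in the input, and the comparison is immediate. Correctness of this check rests on the observation preceding the theorem: the maximum Schmidt number achievable across a bipartition with local dimensions $P_S$ and $P_{S^C}$ equals $\min(P_S, P_{S^C})$, since the Schmidt rank is upper bounded by the smaller local dimension and this bound is saturated by a maximally entangled state across the bipartition.

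For NP hardness, I would reduce from PRODUCT PARTITION, which asks whether a multiset of positive integers $a_1, \ldots, a_m$ can be split into two subsets of equal product; this problem is known to be strongly NP hard (via reduction from 3-PARTITION). Given such an instance, set $A = \prod_i a_i$; if $A$ is not a perfect square, output a trivial no-instance, otherwise output the SCHMIDT-PARTITION instance $(d_1, \ldots, d_m, K) = (a_1, \ldots, a_m, \sqrt{A})$. A PRODUCT PARTITION solution $S$ yields $P_S = P_{S^C} = K$, hence a valid SCHMIDT-PARTITION; conversely, any SCHMIDT-PARTITION solution $S$ satisfies $P_S, P_{S^C} \geq K$ together with $P_S \cdot P_{S^C} = A = K^2$, forcing $P_S = P_{S^C} = K$ and thereby recovering a PRODUCT PARTITION solution. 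All quantities in the reduction have size polynomial in the input, so the reduction is polynomial time.

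The main obstacle I expect is resisting the tempting but flawed reduction directly from ordinary PARTITION via $d_i := 2^{a_i}$ and $K := 2^{T}$ with $T = \frac{1}{2}\sum_i a_i$: although the correctness argument mirrors the PRODUCT PARTITION one, the dimensions $d_i$ have bit length $\Theta(a_i)$, which is exponential in the binary encoding of the PARTITION input. Since PARTITION is only weakly NP hard (admitting a pseudo-polynomial algorithm), such a reduction does not certify NP hardness of SCHMIDT-PARTITION. Routing through a strongly NP hard source like PRODUCT PARTITION, whose instances may be assumed to have polynomially bounded integers, is what keeps the encoded dimensions polynomially sized and makes the reduction genuinely polynomial.
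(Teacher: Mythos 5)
Your proof is correct, but it takes a genuinely different route from the paper's, and it is the more defensible one. The paper's argument takes logarithms of the dimensions and asserts that asking for a bipartition with \(\sum_{i\in S}\log d_i \geq \log K\) and \(\sum_{i\in S^C}\log d_i \geq \log K\) is ``equivalent to'' SUBSET SUM, leaving both directions of that equivalence implicit. The hardness direction of that sketch is exactly the reduction you flag as flawed: encoding a SUBSET SUM or PARTITION instance \(a_1,\dots,a_m\) as dimensions \(d_i = 2^{a_i}\) produces numbers of bit length \(\Theta(a_i)\), exponential in the binary encoding of the source instance, and since those problems are only weakly \texttt{NP}-hard this does not certify \texttt{NP}-hardness of SCHMIDT-PARTITION. (The membership direction of the paper's sketch also glosses over the fact that \(\log d_i\) is irrational in general, so one must work with the products themselves, as you do.) Your reduction from PRODUCT PARTITION sidesteps both issues: the dimensions are carried over verbatim, so the instance size is trivially preserved, and the correctness argument --- \(P_S P_{S^C} = K^2\) together with \(P_S, P_{S^C}\geq K\) forces \(P_S = P_{S^C} = K\) --- is airtight, with \(K=\sqrt{A}\) of polynomial bit length and the perfect-square test computable in polynomial time. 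The only detail to tighten is the provenance of the hardness of PRODUCT PARTITION: the standard reference is Ng, Barketau, Cheng and Kovalyov (2010), and I believe their reduction goes through EXACT COVER BY 3-SETS rather than 3-PARTITION; but since your reduction never rescales the input numbers, plain \texttt{NP}-hardness under binary encoding is all you actually use. What your approach buys is a proof that genuinely establishes the theorem; what the paper's approach buys is only an intuition (``this smells like SUBSET SUM'') whose natural formalization fails for precisely the reason you identify.
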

\begin{proof}
    To have a state of Schmidt number \(K\), the dimension of the smallest subsystem must be at least \(K\). The dimension of combined system of two subsystems of dimensions \(d_1\) and \(d_2\) is \(d_1d_2\). Since all the dimensions are positive numbers, we can take the \(\log\) of the dimensions and ask wether there exists bipartition of the system such that the sum of \(\log(d_i)\) for each part is at least \(\log(K)\). This is equivalent to the \texttt{NP}-complete SUBSET SUM problem in which the input is a set of positive numbers and the task is to find a subset whose sum is \(K\).
\end{proof}
The Theorem \ref{thm:Spart} immediately implies the following corollary.
\begin{corollary}
    Given \(n\) systems of dimensions \(d_1, d_2,...,d_n\), it is \texttt{NP}-complete to find a Schmidt decomposable state of highest Schmidt number.
\end{corollary}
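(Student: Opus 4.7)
The plan is to reduce the decision problem of Theorem \ref{thm:Spart} to the search problem in the corollary, thereby inheriting its \texttt{NP}-hardness, and then to argue \texttt{NP}-membership by exhibiting a short certificate.

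First, I would pin down the meaning of ``highest Schmidt number.'' For any bipartition $(S, S^c)$ of $\{1, \ldots, n\}$, the largest Schmidt number attainable by a pure state across that cut is $\min\bigl(\prod_{i \in S} d_i,\, \prod_{i \in S^c} d_i\bigr)$, which is tight and is achieved by the maximally entangled state on the smaller side. Consequently, the highest Schmidt number over all bipartitions is
\begin{align*}
    K^* \;=\; \max_{S \subseteq \{1,\ldots,n\}} \min\!\Bigl(\prod_{i\in S} d_i,\; \prod_{i\in S^c} d_i\Bigr),
\end{align*}
and an optimal witness can always be encoded as a pair (bipartition $S$, maximally entangled state on $\mathcal{H}_{A_S}\otimes \mathcal{H}_{A_{S^c}}$).

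Next I would establish hardness by a one-step reduction from \texttt{SCHMIDT-PARTITION}: given an instance $(d_1, \ldots, d_n, K)$, invoke the hypothetical search algorithm to obtain an optimizer, read off its Schmidt number $K^*$, and accept iff $K^* \geq K$. Correctness is immediate from the definition of $K^*$, and the reduction is polynomial since $K^*$ has at most $\sum_i \lceil \log_2 d_i \rceil$ bits. For \texttt{NP}-membership, a bipartition $S$ together with a succinct description of a maximally entangled state on $\mathcal{H}_{A_S}$ serves as a polynomial-size certificate; verification only requires computing and comparing the two products $\prod_{i\in S} d_i$ and $\prod_{i \in S^c} d_i$ and checking the claimed Schmidt number, which is polynomial in the binary input length.

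The only subtle bookkeeping point I anticipate is the distinction between the decision and search formulations (the corollary literally asks to \emph{find} a state, while \texttt{NP}-completeness is a property of decision problems); this is handled by the standard self-reduction, where an algorithm deciding ``does there exist a Schmidt decomposable state of Schmidt number at least $K$?'' can, via binary search over $K \in \{1, \ldots, \prod_i d_i\}$, locate $K^*$ and then recover a witness bipartition by querying on dimensions with and without each $d_i$ fixed to one side. The hard combinatorial core --- balancing a product into two nearly equal halves --- has already been reduced to \texttt{SUBSET-SUM} in Theorem \ref{thm:Spart}, so no further combinatorial work is needed here.
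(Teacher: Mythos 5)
Your proposal is correct and follows essentially the same route as the paper, which simply asserts that the corollary ``immediately'' follows from Theorem \ref{thm:Spart} without giving any details. Your write-up supplies exactly the bookkeeping the paper omits --- the formula $K^* = \max_S \min\bigl(\prod_{i\in S} d_i, \prod_{i\in S^c} d_i\bigr)$, the Turing reduction from the decision version, the bipartition-as-certificate argument, and the standard search-to-decision self-reduction --- all of which is sound (modulo the correctness of Theorem \ref{thm:Spart} itself, which both you and the paper take as given).
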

\begin{theorem}\label{thm:reduced}
    If \(\ket{\psi} = \sum_{ij}a_{ij}\ket{i_A}\ket{j_B}\) is a bipartite state where \(\ket{i_A}\) and \(\ket{j_B}\) for orthonormal bases of their respective spaces, then the reduced density matrix of any subsystem is given by 
    \begin{align}
        \rho^A &= AA^\dagger\\
        \rho^B &= BB^\dagger = (A^\dagger A)^T
    \end{align}
    where \(A\) is the matrix of coefficients \(a_{ij}\) and \(B = A^T\).
\end{theorem}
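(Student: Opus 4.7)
The plan is a direct index computation. First I would form the density matrix
\[
\rho = \outpro{\psi}{\psi} = \sum_{i,j,k,l} a_{ij}\,\overline{a_{kl}}\;\ket{i_A}\ket{j_B}\bra{k_A}\bra{l_B},
\]
and then apply the partial trace definition $\ptr{B}{\rho} = \sum_m (I_A\otimes\bra{m_B})\,\rho\,(I_A\otimes\ket{m_B})$. Orthonormality of the $\ket{j_B}$ basis collapses the sums over $j$, $l$, $m$ to $j=l=m$, leaving $\rho^A = \sum_{i,k}\Bigl(\sum_j a_{ij}\overline{a_{kj}}\Bigr)\outpro{i_A}{k_A}$, and the bracketed quantity is precisely the $(i,k)$ entry of $AA^\dagger$. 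That proves the first identity.

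For the second identity I would repeat the same computation tracing out $A$ instead of $B$: orthonormality of $\ket{i_A}$ gives
\[
\rho^B = \sum_{j,l}\Bigl(\sum_i a_{ij}\overline{a_{il}}\Bigr)\outpro{j_B}{l_B}.
\]
To identify this with $(A^\dagger A)^T$, I would write out $(A^\dagger A)_{lj} = \sum_i \overline{a_{il}}\,a_{ij}$ and then take the transpose so that the $(j,l)$ entry becomes $\sum_i a_{ij}\overline{a_{il}}$, matching the expression above. To obtain the form $BB^\dagger$, I would substitute $B=A^T$, so $B_{ji}=a_{ij}$ and $(B^\dagger)_{il}=\overline{a_{il}}$, and verify that $(BB^\dagger)_{jl} = \sum_i a_{ij}\overline{a_{il}}$ agrees with the same entry.

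There is no real obstacle here beyond careful index bookkeeping; the only point that could mislead a reader is why the transpose appears on $A^\dagger A$ but not on $AA^\dagger$. I would therefore emphasise that this asymmetry comes from the asymmetric role the rows versus columns of $A$ play in $\ket{\psi}$: the partial trace over $B$ contracts the second (column) index, producing row-by-row inner products $AA^\dagger$, while the partial trace over $A$ contracts the first (row) index and produces column-by-column inner products, which naturally coincide with the entries of $(A^\dagger A)^T$ rather than of $A^\dagger A$ itself.
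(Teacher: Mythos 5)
Your proposal is correct and follows essentially the same route as the paper's proof: form $\outpro{\psi}{\psi}$, apply the partial trace, collapse the sums using orthonormality, and read off the matrix products entrywise. If anything, you are slightly more explicit than the paper in verifying the identity $BB^\dagger = (A^\dagger A)^T$ entry by entry, which the paper asserts without spelling out.
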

\begin{proof}
The density matrix of the system is 
    \begin{align*}
        \rho &= \outpro{\psi}{\psi}\\
        &= \sum_{ijkl}a_{ij}a^*_{kl}\outpro{i_Aj_B}{k_Al_B}
    \end{align*}
Tracing out B gives
\begin{align*}
    \rho_A &= \sum_{ijk}a_{ij}a^*_{kj}\outpro{i_A}{k_A}\\
    &= \sum_{ik}\left(\sum_j a_{ij}a^*_{kj}\right )\outpro{i_A}{k_A}\\
    &= AA^\dagger
\end{align*}
Tracing out A gives
\begin{align*}
    \rho_B &= \sum_{ijl}a_{ij}a^*_{il}\outpro{j_B}{l_B}\\
    &= \sum_{jl}\left(\sum_i a^*_{il}a_{ij}\right )\outpro{j_B}{l_B}\\
    &= A^T (A^T)^\dagger = BB^\dagger
\end{align*}
\end{proof}
Theorem \ref{thm:reduced} can be extended to multipartite states as well. We show this in case of tripartite states. Similar pattern can be used for higher partite states.
\begin{theorem}\label{thm:reducedTripartite}
    If \(\ket{\psi} = \sum_{ijk}a_{ijk}\ket{i_A}\ket{j_B}\ket{k_C}\) is a tripartite state where \(\ket{i_A}, \ket{j_B}\) and \(\ket{k_C}\) for orthonormal bases of their respective spaces, then the reduced density matrices of one-level and two-level subsystems are given by 
    \begin{align}
        \rho^A &= AA^\dagger\\
        \rho^{AB} &= BB^\dagger
    \end{align}
    where \(B\) is the matrix of coefficients \(a_{ij,k}\) of dimension \(n_An_B\times n_C\) and \(A\) is the matrix of coefficients \(a_{i, jk}\) of dimension \(n_A\times n_Bn_C\).
\end{theorem}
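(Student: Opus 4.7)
The plan is to reduce the tripartite situation to the bipartite Theorem \ref{thm:reduced} by regrouping subsystems, so that each of the two claimed identities becomes an immediate application of the already-proved bipartite formula to a suitable bipartition.

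First I would compute \(\rho^A\). I would regard \(\ket{\psi}\) as a bipartite state across the cut \(A \mid BC\) by introducing the composite basis \(\ket{(jk)_{BC}} := \ket{j_B}\ket{k_C}\). Since \(\{\ket{j_B}\}\) and \(\ket{k_C}\) are orthonormal, the product kets \(\ket{(jk)_{BC}}\) form an orthonormal basis of \(\mathcal H_B\otimes\mathcal H_C\) indexed by the pair \((jk)\). Relabelling, \(\ket{\psi} = \sum_{i,(jk)} a_{i,jk}\,\ket{i_A}\ket{(jk)_{BC}}\), which is exactly the bipartite form of Theorem \ref{thm:reduced} with coefficient matrix \(A\) of dimension \(n_A \times n_B n_C\). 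Applying that theorem directly yields \(\rho^A = AA^\dagger\).

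Next I would handle \(\rho^{AB}\) in completely symmetric fashion, but across the cut \(AB \mid C\). Define \(\ket{(ij)_{AB}} := \ket{i_A}\ket{j_B}\); again orthonormality of the individual bases guarantees orthonormality of these composite kets. Writing \(\ket{\psi} = \sum_{(ij),k} a_{ij,k}\,\ket{(ij)_{AB}}\ket{k_C}\), this is once more a bipartite Schmidt-style expansion, now with coefficient matrix \(B\) of dimension \(n_A n_B \times n_C\). Invoking Theorem \ref{thm:reduced} on the \(AB \mid C\) partition gives \(\rho^{AB} = BB^\dagger\) directly.

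There is no real obstacle here; the content of the theorem is that the bipartite formula of Theorem \ref{thm:reduced} is compatible with the obvious reshaping of a tensor \(a_{ijk}\) into a matrix along any chosen split of its indices. The only point that deserves a sentence of justification in the written proof is that tracing out \(C\) from the full tripartite state agrees with tracing out the right factor after one has merged \(A\) and \(B\) into a single composite subsystem; this follows from the definition of partial trace as a sum of matrix elements over an orthonormal basis, together with the observation that \(\{\ket{i_A}\ket{j_B}\}_{(ij)}\) is such a basis for \(\mathcal H_A\otimes\mathcal H_B\). The same remark explains the analogous fact for the \(A \mid BC\) cut, completing the argument.
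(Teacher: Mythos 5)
Your proposal is correct and is essentially the paper's own argument: the paper likewise merges the index pair \(jk\) (respectively \(ij\)) into a single composite index so that the coefficient tensor becomes the matrix \(A\) (respectively \(B\)), and then reads off \(\rho^A = AA^\dagger\) and \(\rho^{AB} = BB^\dagger\) from the partial-trace sums. The only difference is presentational --- the paper re-executes the partial-trace computation explicitly instead of citing Theorem~\ref{thm:reduced} for the regrouped bipartition, so your version is slightly more economical but mathematically identical.
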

\begin{proof}
    Let us calculate \(\rho_A\) first by tracing over \(B\) and \(C\) in the matrix
    \begin{align}\label{eq:tridensity}
        \outpro{\psi}{\psi} = \sum_{ijk, lmn} a_{ijk}a^*_{lmn}\outpro{i_Aj_Bk_C}{l_Am_Bn_C}
    \end{align}
    Tracing out subsystems \(B\) and \(C\) gives
    \begin{align*}
        \rho_A &= \sum_{ijkl}a_{ijk}a^*{ljk}\outpro{i_A}{l_A}\\
               &= \sum_{il}\left(\sum_{jk}a_{ijk}a^*_{ljk}\right )\outpro{i_A}{l_A}\\
               &= \sum_{il}\left(\sum_{jk}a_{i,jk}a^\dagger_{jk,l}\right )\outpro{i_A}{l_A}
    \end{align*}
    In the inner sum, we treat the indices \(jk\) as one giving us a matrix \(A\) of dimension \(n_A\times n_Bn_C\). Therefore,
    \begin{align*}
        \rho_A = AA^\dagger
    \end{align*}
    Next, tracing out \(C\) only gives
    \begin{align*}
        \rho_{AB} &= \sum_{ijk,lm}a_{ijk}a^*_{lmk}\outpro{i_Aj_B}{l_Am_B}\\
                  &= \sum_{ij,lm}\left (\sum_k a_{ij,k}a^*_{lm,k}\right )\outpro{i_Aj_B}{l_Am_B}
    \end{align*}
    In the inner sum, we define matrix \(B\) of elements \(a_{ij, k}\) of dimension \(n_An_B\times n_C\). Therefore,
    \begin{align*}
        \rho_{AB} = BB^\dagger
    \end{align*}
    Reduced density matrices of other subsystems can be obtained by relabeling the indices appropriately.
\end{proof}
It is easy to see that the Schmidt number of a bipartite state is equal to the rank of the reduced density matrix of one of the subsystems by using \(rank(A) = rank(AA^\dagger) = rank(A^\dagger A) = rank((A^\dagger A)^T)\). This result holds for multipartite states as well. 
\begin{theorem}
    If \(\ket{\psi}\) is a Schmidt decomposable multipartite state, then the Schmidt number is equal to the rank of the reduced density matrix of one of the subsystems.
\end{theorem}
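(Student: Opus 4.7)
The plan is to exploit the reduced-density-matrix formula already derived at the start of Section \ref{sec:schmidtBases}, which shows that for any Schmidt decomposable multipartite state $\ket{\psi} = \sum_{\ell} \lambda_\ell \ket{\ell_{A_1}}\cdots\ket{\ell_{A_n}}$ (with $\lambda_\ell > 0$) and any subset $S \subset \{1,\dots,n\}$ of traced-out subsystems, one obtains
\begin{align*}
    \rho^{A_{S^C}} = \sum_\ell \lambda_\ell^2 \outpro{\ell_{S^C}}{\ell_{S^C}},
\end{align*}
where $\ket{\ell_{S^C}} = \bigotimes_{k \in S^C} \ket{\ell_{A_k}}$. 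I would specialize this to $S^C = \{k\}$ for a single subsystem to obtain $\rho^{A_k} = \sum_\ell \lambda_\ell^2 \outpro{\ell_{A_k}}{\ell_{A_k}}$.

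The next step is to observe that the vectors $\{\ket{\ell_{A_k}}\}_\ell$ appearing in this sum are orthonormal by the very definition of Schmidt decomposition. Consequently the above expression is already a spectral decomposition of $\rho^{A_k}$, with eigenvalues $\lambda_\ell^2$ on the orthonormal eigenbasis $\{\ket{\ell_{A_k}}\}$. The rank of a self-adjoint operator equals the number of its non-zero eigenvalues, so $\rank{\rho^{A_k}}$ is exactly the number of non-zero $\lambda_\ell^2$, which by non-negativity of $\lambda_\ell$ coincides with the number of non-zero $\lambda_\ell$, i.e.\ the Schmidt number.

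I would then note, as a small bonus that falls out of the same argument, that the same conclusion holds for any choice of traced-out subset $S$: the product vectors $\ket{\ell_{S^C}}$ remain orthonormal because $\inpro{\ell_{S^C}}{m_{S^C}} = \prod_{k \in S^C}\inpro{\ell_{A_k}}{m_{A_k}} = \delta_{\ell m}$, so every reduced density matrix $\rho^{A_{S^C}}$ has rank equal to the Schmidt number. This is the multipartite analogue of the bipartite observation made just before the theorem using $\rank{A} = \rank{AA^\dagger}$.

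There is no serious obstacle in this proof; the work has effectively been done already by the opening theorem of the section. The only point that deserves explicit comment is the orthonormality argument ensuring the displayed sum is in spectral form rather than just an operator expansion — this is what prevents cancellations and guarantees that the count of surviving $\lambda_\ell$ is unambiguous. I would keep the proof to a few lines and phrase it so that it reads as an immediate corollary of the tracing-out identity together with Theorem~\ref{thm:multiUnitary}'s perspective on Schmidt bases.
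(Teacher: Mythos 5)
Your proposal is correct and rests on the same core computation as the paper: writing the reduced density matrix in the Schmidt basis as $\rho^{A_{S^C}} = \sum_\ell \lambda_\ell^2 \outpro{\ell_{S^C}}{\ell_{S^C}}$ and counting non-zero coefficients. Your finish is actually cleaner than the paper's --- you simply observe that this expression is already a spectral decomposition with orthonormal eigenvectors, whereas the paper detours through comparing it to a separate spectral decomposition via a unitary change of basis and invoking invariance of rank under similarity, a step that implicitly relies on the very orthonormality observation you make explicit.
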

\begin{proof}
We show here for tripartite states.
    Consider the Schmidt decomposition of \(\ket{\psi} = \sum_\ell\lambda_\ell\ket{\ell_A}\ket{\ell_B}\ket{\ell_C}\).
    \begin{align*}
        \rho^{ABC} &= \outpro{\psi}{\psi}\\
                   &= \sum_{\ell,m}\lambda_\ell\lambda_m\outpro{\ell_A}{m_A}\otimes\outpro{\ell_B}{m_b}\otimes\outpro{\ell_C}{m_C}\\
        \rho^{AB} &= \sum_\ell^d\lambda_\ell^2\outpro{\ell_A\ell_B}{\ell_A\ell_B}
    \end{align*}
    Consider the spectral decomposition of \(\rho^{AB}\) given as 
    \begin{align*}
        \rho^{AB} = \sum_i^r p_i\outpro{i_{AB}}{i_{AB}}
    \end{align*}
    Since the states \(\{\ket{\ell_A\ell_B}\}\) and \(\{\ket{i_{AB}}\}\) are orthonormal sets, they can each be extended to orthonormal bases using Gram-Schmidt orthogonalization process. This implies that there is unitary matrix \(U\) that takes set of vectors \(\{\ket{\ell_A\ell_B}\}\) to \(\{\ket{i_{AB}}\}\).
    \begin{align*}
        U\rho^{AB}U^\dagger = \sum_i^r p_i\outpro{\ell_A\ell_B}{\ell_A\ell_B}
    \end{align*}
    Since \(\rho^{AB}\) and \(U\rho^{AB}U^\dagger\) are similar matrices, they must have the same rank, i.e. \(d = r\).
\end{proof}
In the book \cite{nielsen00} (attributed to Thapliyal), for a bipartite state \(\ket{\psi} = \alpha \ket{\phi} + \beta\ket{\gamma}\), the Schmidt number satisfy the inequality \(Sch(\psi)\geq |Sch(\phi) - Sch(\gamma)|\). Tt is a simple consequence of Theorem \ref{thm:reduced}. We show that it is true for multipartite Schmidt decomposable states. 
\begin{theorem}
    Given three multipartite Schmidt decomposable states \(\ket{\psi}, \ket{\phi}\) and \(\ket{\gamma}\) such that \(\ket{\psi} = \alpha \ket{\phi} + \beta\ket{\gamma}\), then 
    \begin{align}
        Sch(\psi)\geq |Sch(\phi) - Sch(\gamma)|
    \end{align}
\end{theorem}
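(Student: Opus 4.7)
The plan is to reduce the inequality to a standard rank-of-sum bound for matrices, using the preceding theorems that express the Schmidt number of a multipartite Schmidt decomposable state as the rank of a single flattened coefficient matrix. Concretely, for each of $\ket{\psi}, \ket{\phi}, \ket{\gamma}$, I would fix one subsystem (say $A_1$) and flatten its coefficient tensor in the \emph{same} standard product basis into an $n_{A_1}\times(n_{A_2}\cdots n_{A_n})$ matrix, giving $A_\psi, A_\phi, A_\gamma$ respectively. Theorem~\ref{thm:reducedTripartite} (and its direct $n$-partite extension indicated in its proof) yields $\rho_\psi^{A_1} = A_\psi A_\psi^\dagger$, and the preceding theorem then identifies $Sch(\psi) = \rank{\rho_\psi^{A_1}} = \rank{A_\psi A_\psi^\dagger} = \rank{A_\psi}$, with analogous identities for $\phi$ and $\gamma$.

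Next, because the three tensors are written in the same fixed basis, the relation $\ket{\psi} = \alpha\ket{\phi} + \beta\ket{\gamma}$ translates coordinate-wise to the matrix identity $A_\psi = \alpha A_\phi + \beta A_\gamma$. Assuming $\alpha,\beta\neq 0$ so that both summands genuinely contribute, scaling a matrix by a nonzero scalar preserves its rank, and I would then invoke the elementary inequality $\rank{X+Y}\geq |\rank{X}-\rank{Y}|$, which follows from subadditivity $\rank{X+Y}\leq \rank{X}+\rank{Y}$ by writing $X = (X+Y) + (-Y)$ and symmetrizing. Combining this with the Schmidt-number-as-rank identifications gives
\begin{align*}
    Sch(\psi) = \rank{A_\psi} = \rank{\alpha A_\phi + \beta A_\gamma} \geq |\rank{A_\phi} - \rank{A_\gamma}| = |Sch(\phi) - Sch(\gamma)|.
\end{align*}

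The main subtlety, more bookkeeping than genuine obstacle, is that the Schmidt bases of $\ket{\phi}$ and $\ket{\gamma}$ are in general incompatible, so their Schmidt decompositions cannot be superposed term-by-term into any Schmidt form for $\ket{\psi}$. Passing through a common standard basis and then the rank of the flattened coefficient matrix bypasses this difficulty, converting the superposition $\alpha\ket{\phi}+\beta\ket{\gamma}$ into honest matrix addition where the rank inequality applies directly. With this setup in place, the remainder of the argument is routine linear algebra, and the same proof works uniformly for any number of parts $n$ simply by choosing the $A_1$-versus-rest flattening.
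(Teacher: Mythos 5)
Your proposal is correct and follows essentially the same route as the paper: both identify \(Sch(\cdot)\) with the rank of the one-subsystem-versus-rest flattened coefficient matrix via Theorem \ref{thm:reducedTripartite}, translate the superposition into the matrix sum \(A_\psi = \alpha A_\phi + \beta A_\gamma\) in a common product basis, and apply \(\rank{X+Y}\ge |\rank{X}-\rank{Y}|\). Your explicit remark that \(\alpha,\beta\neq 0\) is needed for \(\rank{\alpha A_\phi}=\rank{A_\phi}\) is a small point the paper leaves implicit, but otherwise the arguments coincide.
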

\begin{proof}
    We show it for tripartite state for readability.
    \begin{align*}
    \ket{\psi} = \alpha\ket{\phi} + \beta\ket{\gamma} = \sum_{ijk}(\alpha a^\phi_{ijk} + \beta a^\gamma_{ijk})\ket{i_A}\ket{j_B}\ket{k_C} = \sum_{ijk}a_{ijk}\ket{i_A}\ket{j_B}\ket{k_C}
\end{align*}
Using Theorem \ref{thm:reducedTripartite}, we get 
\begin{align*}
    \rho_A &= AA^\dagger\\
           &= (\alpha A_\phi + \beta A_{\gamma})(\alpha A_\phi + \beta A_{\gamma})^\dagger
\end{align*}
This implies that
\begin{align*}
    Sch(\psi) &= rank(\rho_A)\\
              &= rank((\alpha A_\phi + \beta A_{\gamma})(\alpha A_\phi + \beta A_{\gamma})^\dagger)\\
              &= rank(\alpha A_\phi + \beta A_{\gamma})\\
              &\geq |rank(\alpha A_\phi) - rank(\beta A_{\gamma})|\\
              &= |rank(A_\phi) - rank(A_{\gamma}) |\\
              &= |Sch(\phi) - Sch(\gamma)|
\end{align*}
\end{proof}

\begin{observation}
    Linear combination of Schmidt decomposable multipartite states is not necessarily Schmidt decomposable. An an example, we can work with \(\ket{W} = \frac{1}{\sqrt{3}}(\ket{001}) + \ket{010} + \ket{100}\) which is not Schmidt decomposable, but each of the terms are Schmidt decomposable.
\end{observation}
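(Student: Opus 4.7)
The plan is to establish this observation by exhibiting a single explicit counterexample, since one instance already suffices to refute a universal closure property. The most convenient candidate is the three-qubit state $\ket{W}$ that has appeared earlier in the excerpt, because it decomposes naturally into three computational basis kets whose Schmidt decomposability is trivial, and its own non-decomposability has already been established via the criterion in Theorem \ref{thm:tripartite}.

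First I would verify that each summand $\ket{001}$, $\ket{010}$, $\ket{100}$ is Schmidt decomposable. This is immediate: each is a product state, and its Schmidt decomposition is itself, consisting of a single term with coefficient $1$ and the (automatically orthonormal) one-element families $\{\ket{0}\}$ or $\{\ket{1}\}$ on each subsystem. In the language of Schmidt decomposition this corresponds to Schmidt number $1$, the minimal possible value.

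Next I would invoke the analysis of $\ket{W}$ already carried out just after Theorem \ref{thm:tripartite}. There, the coefficient matrices $A_0, A_1$ are explicitly displayed, shown to satisfy the positive commuting condition $C_0C_1=C_1C_0$, but the diagonal-matrix $S$ extracted from $P^\dagger A_i Q^\dagger$ fails the scaled-unitary test since $SS^\dagger$ is computed to be non-diagonal. By the only-if direction of Theorem \ref{thm:tripartite}, $\ket{W}$ admits no Schmidt decomposition. Combining with the identity $\ket{W} = \frac{1}{\sqrt{3}}\ket{001} + \frac{1}{\sqrt{3}}\ket{010} + \frac{1}{\sqrt{3}}\ket{100}$ yields the desired conclusion.

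There is essentially no obstacle: the substantive content, namely the non-decomposability of $\ket{W}$, has already been verified in the text, so the observation reduces to assembling two facts the reader has just seen. The only choice being made is the selection of the counterexample; using $\ket{W}$ is natural because it avoids having to reapply the machinery of Theorem \ref{thm:tripartite} to a fresh state.
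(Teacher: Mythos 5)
Your proposal is correct and matches the paper's approach exactly: the observation is established by the single counterexample \(\ket{W}\), whose summands are product states (Schmidt number \(1\)) and whose non-decomposability was already verified earlier in the text via Theorem \ref{thm:tripartite} by showing \(SS^\dagger\) is not diagonal. No further argument is needed.
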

In light of the above observation, we prove the following result for tensor product of Schmidt decomposable states.
\begin{definition}
    An \(n\)-partite Schmidt decomposable state of Schmidt rank \(k\) is said to have Schmidt dimension \((n,k)\).
\end{definition}
\begin{theorem}
    Given two states \(\ket{\psi}\) and \(\ket{\phi}\) of Schmidt dimensions \((m,k_\psi)\) and \((n, k_\phi)\) such that \(m\geq n\), then \(\ket{\psi}\ket{\phi}\) has dimension \((n, k_\psi k_\phi)\) and there are \({m-1\choose n-1}\) ways to construct the \(n\)-partite systems.
\end{theorem}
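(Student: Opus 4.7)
The plan is to leverage the tensor-product structure of both Schmidt decompositions and regroup factors into $n$ blocks so that the joint index serves as the new Schmidt index. Starting from
\begin{align*}
\ket{\psi} &= \sum_{\ell=1}^{k_\psi} \lambda_\ell \ket{\ell_{A_1}}\cdots\ket{\ell_{A_m}}, \\
\ket{\phi} &= \sum_{\mu=1}^{k_\phi} \tilde\lambda_\mu \ket{\mu_{B_1}}\cdots\ket{\mu_{B_n}},
\end{align*}
the tensor product reads
\begin{align*}
\ket{\psi}\ket{\phi} = \sum_{\ell,\mu} \lambda_\ell \tilde\lambda_\mu \ket{\ell_{A_1}}\cdots\ket{\ell_{A_m}}\ket{\mu_{B_1}}\cdots\ket{\mu_{B_n}},
\end{align*}
so the double index $(\ell,\mu)$ is the natural candidate for the new Schmidt index, carrying coefficient $\lambda_\ell\tilde\lambda_\mu$.

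Next I would fix a composition $m = s_1+\cdots+s_n$ with each $s_k\geq 1$ and define the $k$-th part $P_k$ to consist of $s_k$ consecutive $A$-subsystems together with $B_k$. The state on $P_k$ associated to the index $(\ell,\mu)$ is then a tensor product of $s_k$ factors of the form $\ket{\ell_{A_i}}$ and the single factor $\ket{\mu_{B_k}}$. Because $P_k$ always contains at least one $A$-factor \emph{and} one $B$-factor, the inner product $\langle(\ell',\mu')|(\ell,\mu)\rangle_{P_k}$ collapses to $\delta_{\ell\ell'}\delta_{\mu\mu'}$, so the set $\{\ket{(\ell,\mu)_{P_k}}\}$ is orthonormal in each part. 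This exhibits $\ket{\psi}\ket{\phi}$ as a genuine $n$-partite Schmidt decomposition with $k_\psi k_\phi$ strictly positive coefficients, giving Schmidt rank exactly $k_\psi k_\phi$.

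For the counting, every valid construction of this form is uniquely determined by the composition $(s_1,\ldots,s_n)$, and a stars-and-bars argument yields $\binom{m-1}{n-1}$ such compositions; the hypothesis $m\geq n$ is exactly what makes the positivity constraint $s_k\geq 1$ feasible. The main obstacle in writing this up carefully is justifying why the constraint $s_k\geq 1$ is forced: if some $P_k$ were to contain no $A$-subsystem, its state vector would depend only on $\mu$ and not on $\ell$, collapsing distinct joint indices $(\ell,\mu)$ and $(\ell',\mu)$ to the same vector in that part, breaking orthonormality of the joint-indexed set and invalidating both the identification as a bona fide Schmidt decomposition and the announced rank.
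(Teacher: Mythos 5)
Your proposal is correct and follows essentially the same route as the paper: regroup the $m$ factors of $\ket{\psi}$ into $n$ nonempty blocks, attach one $B$-factor to each block so the joint index $(\ell,\mu)$ labels orthonormal vectors in every part, and count the compositions of $m$ into $n$ positive parts to get $\binom{m-1}{n-1}$. You actually supply two details the paper leaves implicit — the explicit $\delta_{\ell\ell'}\delta_{\mu\mu'}$ verification of orthonormality and the reason the constraint $s_k\geq 1$ is forced — but the underlying argument is the same.
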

\begin{proof}
    Starting with the Schmidt decompositions of \(\ket{\psi}\) and \(\ket{\phi}\)
    \begin{align*}
        \ket{\psi} &= \sum_{i}^{k_\psi} \lambda_i^\psi \ket{i_{A_1}...i_{A_m}}\\
        \ket{\phi} &= \sum_{j}^{k_\phi} \lambda_j^\phi \ket{j_{B_1}...i_{B_n}}\\
        \ket{\psi}\ket{\phi} &= \sum_{i,j}^{k_\psi, k_\phi} \lambda_i^\psi \lambda_j^\phi \ket{i_{A_1}...i_{A_m}j_{B_1}...j_{B_n}}
    \end{align*}
    Notice that the above expression for \(\ket{\psi}\ket{\phi}\) is not a Schmidt decomposition of \((m + n)\)-partite system as we lose the bijection between states of each part. We partition \(m\)-parts of \(\ket{\psi}\) into \(n\) groups such that each group has at least 1 part. Let \(x_1, ..., x_n\) be the sizes of each groups. For each group, we link one part from \(\ket{\phi}\) as shown below:
    \begin{align*}
        \ket{\psi}\ket{\phi} &= \sum_{i,j}^{k_\psi, k_\phi} \lambda_i^\psi \lambda_j^\phi \ket{i_{A_1}...i_{A_{x_1}}j_{B_1}}\ket{i_{A_{x_1 + 1}}...A_{x_1 + x_2}j_{B_2}}...\ket{i_{m-(x_n) + 1}...i_{m}j_{B_n}}
    \end{align*}
    It is easy to verify that states of each group are orthonormal and they don't repeat in the sum, giving a valid Schmidt decomposition. Hence, \(\ket{\psi}\ket{\phi}\) can be partitioned into \(n\) parts such that the \(n\)-partite system has Schmidt decomposition of rank \(k_\psi k_\phi\).

    The number of such groups for fixed permutation of parts in \(\ket{\psi}\) and \(\ket{\phi}\) is equal to the number of solutions of 
    \begin{align*}
        x_1 + x_2 + \cdots + x_n &= m\\
        x_i &\geq 1
    \end{align*}
    which is given by \({m-1\choose n-1}\).
\end{proof}
%%%%%%%%%%%%%%%%%%%%%%%%%%%%%%%%%%%%%%%%%%%%%%%%%%%%%%%%%%%%%%%%%%%%%%%%%%%%%%%%%%%%%%%%%%%%%%%%%%
\section{Purification}
Given a density matrix \(\rho\) of a system \(A\), the task of purification is to attach another subsystem \(R\) such that the combine system is in pure state \(\ket{AR}\) and the tracing out \(R\) gives back \(\rho\). The standard trick \cite{Schrödinger_1936, HUGHSTON199314, Jaynes, Hadjisavvas, Gisin} is to keep the dimension of \(R\) to be at least the rank of \(\rho\) and an embed orthonormal basis into \(\rho\) as follows starting with the spectral decomposition of \(\rho\).
\begin{align*}
    \rho &= \sum_i \lambda_i \outpro{i_A}{i_A}\\
    \ket{AR} &= \sum_i \sqrt{\lambda_i}\ket{i_A}\ket{i_R}\\
    \rho &= \ptr{R}{\outpro{Ar}{Ar}}\\
         &= \sum_i \lambda_i \outpro{i_A}{i_A}
\end{align*}
Essentially, we have used the Schmidt decomposition of \(\ket{AR}\).

Regarding purification, one can ask some interesting questions like given a density matrix of a composite system, can it be purified to a Schmidt decomposable state? If a multipartite state is not Schmidt decomposable, can an additional system make it Schmidt decomposable?
\begin{lemma}\cite{nielsen00}\label{thm:twopure}
    Two purifications \(\ket{AR_1}\) and \(\ket{AR_2}\) of a state \(\rho\) are unitarily linked as \begin{align*}
        \ket{AR_1} = (I\otimes U_R)\ket{AR_2}
    \end{align*} 
\end{lemma}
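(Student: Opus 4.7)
The plan is to reduce both purifications to a canonical form via the spectral decomposition of \(\rho\), and then to read off the linking unitary as a change-of-basis on \(R\). I will proceed in three stages.

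First, I start from the spectral decomposition \(\rho = \sum_i \lambda_i \outpro{i_A}{i_A}\) and establish a normal-form statement: every purification \(\ket{AR}\) of \(\rho\) can be written as \(\ket{AR} = \sum_i \sqrt{\lambda_i}\ket{i_A}\ket{e_i}\) for some orthonormal set \(\{\ket{e_i}\}\) in \(R\). To prove this, I expand \(\ket{AR} = \sum_{i,\alpha} c_{i\alpha}\ket{i_A}\ket{\alpha_R}\) in an arbitrary orthonormal basis \(\{\ket{\alpha_R}\}\) of \(R\), compute \(\ptr{R}{\outpro{AR}{AR}}\), and match the resulting coefficients against \(\rho\). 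The matching yields \(\sum_\alpha c_{i\alpha}c_{j\alpha}^* = \lambda_i\delta_{ij}\), so the vectors \(\ket{e_i} := \lambda_i^{-1/2}\sum_\alpha c_{i\alpha}\ket{\alpha_R}\) (for \(\lambda_i > 0\)) form the required orthonormal set.

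Second, applying this normal form to both purifications while keeping the \(A\)-side eigenbasis \(\{\ket{i_A}\}\) and Schmidt coefficients \(\sqrt{\lambda_i}\) fixed, I get \(\ket{AR_1} = \sum_i \sqrt{\lambda_i}\ket{i_A}\ket{e_i^{(1)}}\) and \(\ket{AR_2} = \sum_i \sqrt{\lambda_i}\ket{i_A}\ket{e_i^{(2)}}\) with two (possibly different) orthonormal sets \(\{\ket{e_i^{(1)}}\}\) and \(\{\ket{e_i^{(2)}}\}\) inside the same ancilla space \(R\). I extend each set to a full orthonormal basis of \(R\) by Gram--Schmidt and define \(U_R\) to be the unique unitary sending \(\ket{e_i^{(2)}} \mapsto \ket{e_i^{(1)}}\) on the used vectors, matched arbitrarily on the completions. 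A direct substitution then gives \((I \otimes U_R)\ket{AR_2} = \sum_i \sqrt{\lambda_i}\ket{i_A}\ket{e_i^{(1)}} = \ket{AR_1}\), closing the argument.

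The main obstacle will be the normal-form step when \(\rho\) has degenerate non-zero eigenvalues: the eigenbasis \(\{\ket{i_A}\}\) is then not unique, so I must fix one choice at the outset and reuse it for \emph{both} purifications before invoking the normal form, otherwise the two expansions will not share the same \(A\)-side vectors and the linking operator on \(R\) alone will not exist. A secondary technicality is the implicit assumption that \(R_1\) and \(R_2\) share a common Hilbert space of dimension at least \(\rank{\rho}\); if the ambient purifying spaces differ in dimension, one first embeds both into a common sufficiently large auxiliary space so that \(U_R\) can be defined as a genuine unitary rather than a partial isometry.
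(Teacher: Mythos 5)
Your proof is correct and follows essentially the same route as the paper's: write both purifications in the normal form \(\sum_i\sqrt{\lambda_i}\ket{i_A}\ket{e_i}\) over a fixed eigenbasis of \(\rho\), then connect the two orthonormal ancilla sets by a unitary on \(R\). The paper simply asserts that every purification takes this normal form, whereas you actually derive it by expanding in an arbitrary basis of \(R\) and matching the partial-trace coefficients (and you correctly flag the degenerate-eigenvalue and dimension-mismatch caveats), so your version is the more complete argument.
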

\begin{proof}
    The key idea is that two orthonormal sets are unitarily connected. Using the spectral decomposition of \(\rho\), we have \(\rho = \sum_i \lambda_i\outpro{i_A}{i_A}\).
    \begin{align*}
        \ket{AR_1} &= \sum_i\sqrt{\lambda_i}\ket{i_A}\ket{i^R_1}\\
        \ket{AR_2} &= \sum_i\sqrt{\lambda_i}\ket{i_A}\ket{i^R_2}\\
    \end{align*}
    Since the orthonormal sets \(\{\ket{i^R_1}\}\) and \({\ket{i^R_2}}\) can be connected by unitary matrix \(U\) such that \(\ket{i^R_1} = U\ket{i^R_2}\), we have
    \begin{align*}
        \ket{AR_1} &= \sum_i\sqrt{\lambda_i}\ket{i_A}\ket{i^R_1}\\
                   &= \sum_i\sqrt{\lambda_i}\ket{i_A}U\ket{i^R_2}\\
                   &= (I\otimes U) \sum_i\sqrt{\lambda_i}\ket{i_A}\ket{i^R_2}\\
                   &= (I\otimes U)\ket{AR_2}
    \end{align*}
\end{proof}
\begin{theorem}
    Every purification of a multipartite system \(\rho\) is either Schmidt decomposable or not Schmidt decomposable.
\end{theorem}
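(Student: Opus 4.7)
The plan is to combine Lemma \ref{thm:twopure} with the local-unitary invariance of Schmidt decomposability that was established in Theorem \ref{thm:multiUnitary}. Concretely, I would regard each purification $\ket{AR}$ of the $n$-partite system $\rho$ as a pure state on the $(n+1)$-partite Hilbert space $\mathcal{H}_{A_1}\otimes\cdots\otimes\mathcal{H}_{A_n}\otimes\mathcal{H}_R$, so that Schmidt decomposability is to be judged with respect to this multipartition.

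First I would take two arbitrary purifications $\ket{AR_1}$ and $\ket{AR_2}$ of $\rho$ and invoke Lemma \ref{thm:twopure} to obtain the relation
\begin{align*}
    \ket{AR_1} = (I_A\otimes U_R)\ket{AR_2}.
\end{align*}
The key observation is that $I_A = I_{A_1}\otimes I_{A_2}\otimes\cdots\otimes I_{A_n}$ is already a product of local identities, so the full operator factorises as
\begin{align*}
    I_{A_1}\otimes I_{A_2}\otimes\cdots\otimes I_{A_n}\otimes U_R,
\end{align*}
which is exactly the kind of separable unitary that Theorem \ref{thm:multiUnitary} treats.

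Next I would use the forward direction of the argument used to prove Theorem \ref{thm:multiUnitary}: if $\ket{AR_2}$ admits a Schmidt decomposition $\sum_\ell \mu_\ell \ket{\ell_{A_1}}\cdots\ket{\ell_{A_n}}\ket{\ell_R}$, then applying the above tensor of unitaries gives $\sum_\ell \mu_\ell \ket{\ell_{A_1}}\cdots\ket{\ell_{A_n}}(U_R\ket{\ell_R})$, and since $U_R$ maps orthonormal sets to orthonormal sets this is a valid Schmidt decomposition of $\ket{AR_1}$ with the \emph{same} coefficients. Because the relation is symmetric (apply $I\otimes U_R^\dagger$ to go the other way), Schmidt decomposability of either purification forces Schmidt decomposability of the other, proving the dichotomy claimed in the statement.

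The one subtlety that needs care is that the two purifications may in principle use reference systems of different dimensions, in which case Lemma \ref{thm:twopure} does not literally apply as written. The standard fix is to enlarge the smaller reference space by direct sum with a trivial summand and view the corresponding purification as living in the larger Hilbert space with zero amplitude on the new basis vectors; this padding preserves both the purification property and the Schmidt decomposability (the extra basis states simply do not appear in any decomposition). Once both purifications sit on a common reference space, Lemma \ref{thm:twopure} supplies the required $U_R$ and the argument above closes. This dimensional bookkeeping is the only step that is not a direct application of results already in the paper.
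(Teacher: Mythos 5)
Your proposal is correct and follows essentially the same route as the paper, which simply states that the result follows by combining Lemma \ref{thm:twopure} with Theorem \ref{thm:multiUnitary}; you have filled in exactly that combination, observing that $I_{A_1}\otimes\cdots\otimes I_{A_n}\otimes U_R$ is a separable unitary so Schmidt decomposability transfers between any two purifications. Your added remark about padding reference spaces of unequal dimension is a sensible detail the paper leaves implicit.
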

\begin{proof}
    The proof follows by combining Lemma \ref{thm:twopure} with Theorem \ref{thm:multiUnitary}.
\end{proof}
\section{Conclusion}
Schmidt decomposition remains to be an active area of research in quantum computation and quantum information. In this work, we have reviewed some of the known properties of Schmidt decomposition with respect to state vector. We saw that finding partition of highest Schmidt number is NP-complete and purification of a multipartite density matrix falls in one of the two categories, either it has Schmidt decomposition or not. Future work will involve other extensions and generalization of Schmidt number, particularly of mixed states.
% The \nocite command causes all entries in a bibliography to be printed out
% whether or not they are actually referenced in the text. This is appropriate
% for the sample file to show the different styles of references, but authors
% most likely will not want to use it.
\nocite{*}

\bibliography{main}% Produces the bibliography via BibTeX.

\end{document}